\documentclass[10pt,onecolumn,draftclsnofoot]{IEEEtran}

\usepackage[T1]{fontenc}

\usepackage{cite}

\usepackage{amsmath}

\interdisplaylinepenalty=2500

\usepackage{mathtools}
\usepackage{array}%
\usepackage{amsthm}
\usepackage{amsfonts}
\usepackage{amssymb}
\usepackage{xcolor}

\newtheorem{theorem}{Theorem}


\newtheorem{claim}{Claim}

\newtheorem{corollary}{Corollary}

\usepackage{scalerel}
\usepackage{authblk}
\usepackage{tikz}
\usetikzlibrary{svg.path}

\definecolor{orcidlogocol}{HTML}{A6CE39}
\tikzset{
  blaaa/.pic={
    \fill[orcidlogocol] 
                 svg{M256,128c0,70.7-57.3,128-128,128C57.3,256,0,198.7,0,128C0,57.3,57.3,0,128,0C198.7,0,256,57.3,256,128z};
    \fill[white] 
                 svg{M86.3,186.2H70.9V79.1h15.4v48.4V186.2z}
                 svg{M108.9,79.1h41.6c39.6,0,57,28.3,57,53.6c0,27.5-21.5,53.6-56.8,53.6h-41.8V79.1z M124.3,172.4h24.5c34.9,0,42.9-26.5,42.9-39.7c0-21.5-13.7-39.7-43.7-39.7h-23.7V172.4z}
                 svg{M88.7,56.8c0,5.5-4.5,10.1-10.1,10.1c-5.6,0-10.1-4.6-10.1-10.1c0-5.6,4.5-10.1,10.1-10.1C84.2,46.7,88.7,51.3,88.7,56.8z};
  }
}

\newcommand\orcidicon[1]{\href{https://orcid.org/#1}{\mbox{
\begin{tikzpicture}[remember picture]
\coordinate(A);
\coordinate(B) at ($(A)-(3pt,-10pt)$);
\end{tikzpicture}
\begin{tikzpicture}[overlay,remember picture,yscale=-0.035,xscale=0.035,transform shape]
\pic at (B){blaaa};
\end{tikzpicture}
}{}}}
\usepackage{hyperref} 

\begin{document}

\title{Optimal Wireless Caching with Placement Cost}

\author{{
\large{Yousef AlHassoun\protect\orcidicon{0000-0002-3165-1821}, {\em Student Member, IEEE}, 
Faisal Alotaibi\protect\orcidicon{0000-0001-8024-1776}, {\em Student Member, IEEE}, 
\\Aly El Gamal\protect\orcidicon{0000-0002-0400-4506}, {\em Senior Member, IEEE}, 
and Hesham El Gamal\protect\orcidicon{0000-0002-0900-5962}, {\em Fellow, IEEE}}
}
\thanks{Yousef AlHassoun, Faisal Alotaibi, and Hesham El Gamal are with the ECE Department of the Ohio State University, Columbus, OH (e-mail: \{alhassoun.1, alotaibi.12, elgamal.2\}@ece.osu.edu). Aly El Gamal is with the ECE Department of Purdue University, West Lafayette, IN (e-mail: elgamala@purdue.edu)}}

\usetikzlibrary{backgrounds,calc,fit} 
\maketitle
\begin{abstract}
Coded caching has been shown to result in significant throughput gains, but its gains were proved only by assuming a placement phase with no transmission cost. A \emph{free} placement phase is, however, an unrealistic assumption that could stand as an obstacle towards delivering the promise of coded caching. In \cite{y2019jointly}, we relaxed this assumption by introducing a general caching framework that captures transmission costs for both delivery and placement phases under general assumptions on varying network architectures, memory constraints, and traffic patterns. Here, we leverage this general framework and focus on analyzing the effect of the placement communication cost on the overall throughput and the structure of the optimal caching scheme, under the assumptions of the worst case traffic pattern and unlimited memory at the end users.
Interestingly, we find relevant network configurations where uncoded caching is the optimal solution. 
\end{abstract}

\begin{IEEEkeywords}
Coded caching, Unlimited Memory, Network Architecture, Caching Type, Worst Case Caching Scenario.
\end{IEEEkeywords}

\IEEEpeerreviewmaketitle

\section{Introduction}

\IEEEPARstart{R}{}ising demand from bandwidth-intensive applications, such as virtual reality and video sharing, is imposing a significant burden on the available wireless infrastructure. This can negatively impact - in a significant manner - the Quality of Service (QoS) guarantees during peak times. On the other hand, several reports have shown that the infrastructure is largely underutilized during {\em off-peak} hours, as the network is specifically designed to handle the peak demand \cite{federal2002fcc,adelstein2003facilitating}. This observation suggested the need for caching schemes that {\em balance the traffic} over peak and off-peak times.


In this work, we focus on leveraging the multi-casting advantage of coded caching in wireless networks to reduce the cost of data delivery (see e.g., \cite{Maddah-Ali2014FundamentalCaching, Maddah-Ali2015DecentralizedTradeoff,yu2018exact}). In our model, there are two distinct phases of communication between the source, i.e., Service Provider (SP), and destinations, i.e., end-users. In the {\em placement} phase, the SP sends information to be stored in each end user's local storage. Note that communication costs are assumed to be zero in the current literature during the placement phase (i.e., free communication). In the {\em delivery} phase, multi-cast coding is carefully implemented to minimize total delivery costs by exploiting the wireless channel's broadcast nature. Here we relax the assumption of zero placement cost and adopt several components of the optimization theoretic framework in \cite{jin2017structural} and \cite{daniel2017optimization} to derive optimal caching schemes in a more realistic setting.


We address this problem by jointly optimizing the placement and delivery strategies under more realistic assumptions on the cost of communication. We study the case when the memory constraint is relaxed, and focus on the impact of the communication cost during the placement phase on the optimal scheme. Moreover, we allow the overall cost of communication to be measured through two distinctive factors between the two phases. First, our model contains a different cost-per-transmission scaling factor for each phase. Second, we allow a different network architecture in the placement phase while we adopt the shared medium assumption in the delivery phase, resulting in a generalized model that enables capturing the case of varied networks (e.g., Wi-Fi versus cellular) in the two phases.  We then consider minimizing the cost of transmission in the delivery phase, while ensuring that the cost of transmission in the placement phase is not exceeded. This approach is used to derive valuable insights into the design of optimal schemes and to create a more precise interpretation of the relative advantages that can be leveraged in practical settings from coded multi-cast caching.

We first proposed this general framework in~\cite{y2019jointly}. Here we derive rigorous results on the structure of the optimal scheme and resulting overall gain under the assumptions of the worst case traffic pattern and unlimited caching memory at the end users.

\section{System Model}\label{Sec:Sys_model}
We consider a service provider (SP) who supplies $N \in \mathbb{N}$ files to $K \in \mathbb{N}$ users through a shared error-free link, over two phases namely, the placement phase and the delivery phase, where $K \leq N$. Let $\mathcal{N} \triangleq \{1, 2, ...,N\}$ and $\mathcal{K} \triangleq \{1, 2, ...,K\}$ denote the sets of file and user indices, respectively. We denote the file (word) with index $n \in \mathcal{N}$ by $\boldsymbol{W}_n$, whereas we assume that every user $k\in\mathcal{K}$ has an unlimited cache size.

\textbf{Placement cost function:}
In order to transmit to $r \in \mathcal{K}$ users during the placement phase, the SP incurs a cost 
\begin{align}\label{ct}
        c_r = \rho r^{\alpha}, 
    \end{align}
where $\alpha \in [0,1]$ is a cost parameter to capture the varying network architecture, and $\rho \in [0,1]$ is a linear cost multiplier. 
The special case of a shared medium in the placement phase can be recovered from this model by setting $\alpha=0$, whereas the special case of a Time Division Multiple Access (TDMA) channel used for placement corresponds to $\alpha=1$. 

Let $R_o$ (off-peak) and $R_p$ (peak) be the rates of transmission that the SP incurs during the placement and delivery phases, respectively. In order to minimize the peak time rate, the SP facilitates caching during the placement phase. In particular, with a slight abuse of notation, we say that the SP will transmit a message $\boldsymbol{Y_o}$ to all users, and every user $k$ will store a part $\boldsymbol{Z_k}$ in its memory. Note that, depending on the off-peak network architecture, this transmission of $\boldsymbol{Y_o}$ can be a single broadcast transmission, multiple unicast or multicast transmissions, or a combination thereof.  During the delivery phase, the SP will receive $K$ user requests.  Let $D_k$ denote the index of the file requested by user $k$ during the delivery phase, and let $\boldsymbol{D}=(D_1,...,D_K)$ denote the sequence of all user requests. The SP then broadcasts a multicast message $\boldsymbol{Y_p}$ over the shared link, which will be used for reconstructing the requested message with the aid of the cached content at each user.

\section{Problem Formulation} \label{Sec:Problem_for}

\subsection{Placement Phase}
The SP partitions each file $\boldsymbol{W}_n$ into $|\mathcal{P}(\mathcal{K})|$ non-overlapping subfiles, where $\mathcal{P}(\mathcal{K})$ is the powerset of the total number of users; these subfiles are denoted by $\boldsymbol{W}_{n,S}$, where $S \in \mathcal{P}(\mathcal{K})$. Moreover, the subfiles are classified according to their types $t \in \{0,1,2,...,K\}$, where $t = |S|$. Examples of subfiles for file $\boldsymbol{W}_n$, if we have $K=3$ would be $\boldsymbol{W}_{n,\{1,2\}}$ with type $t = 2$, and $\boldsymbol{W}_{n,\{1,2,3\}}$ with type $t=3$. The type will help us to group and label subfiles according to their role in the placement process. 

After this division, the SP will transmit in the placement phase (off-peak time) a concatenated message $\boldsymbol{Y}_o$ of all subfiles of type $t>0$ for all files to all users, that is:
 \begin{equation}
     \boldsymbol{Y}_o = \Big(\boldsymbol{W}_{n,S}: n\in \mathcal{N}, S\in \mathcal{P}(\mathcal{K})\backslash \emptyset \Big),
 \end{equation}
 where $\emptyset$ is the empty set.
Upon receiving this broadcast message, every user $k$ will store subfiles that have its index in the subfile label, that is:
\begin{equation}
    \boldsymbol{Z}_k = \Big(\boldsymbol{W}_{n,S}: n\in\mathcal{N}, S\in \mathcal{P}(\mathcal{K})\backslash \emptyset, k \in S\Big). 
\end{equation}
Let $x_{n,S}$ denote the ratio between the number of bits in $\boldsymbol{W}_{n,S}$ and the number of bits in $\boldsymbol{W}_n$. More precisely, we have: 
\begin{equation}
x_{n,S} = \dfrac{|\boldsymbol{W}_{n,S}|}{|\boldsymbol{W}_n|} \leq 1. 
\end{equation}
Also, let the file partitioning vector parameter be 
\begin{equation}
\boldsymbol{x}=\Big(x_{n,S}:\quad n\in\mathcal{N}, S\in \mathcal{P}(\mathcal{K})\Big).    
\end{equation}
We will restrict our attention to the special case of the problem with uniform file lengths, $|\boldsymbol{W}_n| =|\boldsymbol{W}|, \forall n\in \mathcal{N}$.  Furthermore, we will be following the simplification in \cite{jin2017structural} and \cite{daniel2017optimization} under the worst case scenario. For file $n$, we will have that all subfiles with the same type $t$ have the same size:
\begin{align}
    x_{n,S} &= x_t, \quad \forall n\in \mathcal{N}, S\in \mathcal{P}(\mathcal{K}): |S|=t.   \label{simp1}
\end{align}
We note that $\boldsymbol{x}$ should satisfy the following constraints: 
\begin{align}
& \sum^{K}_{t=0} a_t x_t = 1, \label{eq2} \\
& 0 \leq x_t , \quad \forall  t\in \{0,1,...,K\}, \label{eq1}
\end{align}
where $a_t = \binom{K}{t}$. Here, (\ref{eq2}) represents the file partitioning constraint. Therefore, 
the rate in the placement phase given all system parameters is as follow:
\begin{equation}\label{gRo}
R_o(\boldsymbol{x})= N \sum ^{K}_{t=1} a_t c_t x_t. 
\end{equation}

\subsection{Delivery Phase}
Upon receiving the user requests $\boldsymbol{D}$, the SP distinguishes the needed files and the possible opportunities for simultaneously serving multiple requests for global gain using a coded broadcast message. 
For each subset $S \in {\cal P}({\cal K})\backslash \emptyset$  where $|S|=t$, note that every $t-1$ users with indices in $S$ share a subfile stored in their memory, and it is needed by the remaining user in $S$. More precisely, for any $k \in S$, the subfile $\boldsymbol{W}_{D_k,S\backslash \{k\}}$ is missing at the memory of user $k$, whereas it is present in the memory of any user in $S\backslash \{k\}$. For example, if we have $K=3$ and user $k=1$ requests file $D_1$, the subfile $\boldsymbol{W}_{D_1,\{2,3\}}$ is missing at user $k=1$, while it is available at users $k=2,3$.  The transmitted coded multicasting message in the delivery phase (peak time) can hence be described as:
\begin{equation}
     \boldsymbol{Y}_p = \Big(\oplus_{ k \in S}\  \boldsymbol{W}_{D_k,S\backslash \{k\}}: S\in \mathcal{P}(\mathcal{K})\backslash \emptyset\Big),
\end{equation}
 where $\oplus$ denotes the bitwise XOR operation. 
 We focus on the \textit{worst case scenario}, where the SP will receive $K$ \textit{different} requests. Therefore, the rate in the delivery phase is given by:
 
\begin{equation}\label{Rp}
    R_p(\boldsymbol{x}) =  \sum_{t=0}^{K-1} a_t b_t x_t,
\end{equation}
where $b_t = \frac{K-t}{t+1}$.

\subsection{Optimization Problem}
To avoid minimizing one rate (peak time rate) at the cost of the other (off peak time rate), we introduce a condition on the placement phase rate that guarantees that no other peak (congestion) is created at the placement time. That is:
\begin{align}
    R_o(\boldsymbol{x}) \leq R_p(\boldsymbol{x}). \label{rate}
\end{align}
The SP defines its optimization cost function as minimizing the peak time rate while taking into account not to exceed the off peak time rate. Then, the SP optimization problem is:
\begin{align}
    & \underset{\boldsymbol{x}}{\text{minimize}}
    \quad  R_p(\boldsymbol{x})  \label{Opt1}\\ 
    & \text{subject to}
    \quad (\ref{eq2}), (\ref{eq1}), (\ref{rate})  \label{Opt111}.
\end{align}

\section{Main Result}
\begin{theorem}\label{the1}
For the worst case scenario, the optimal placement phase will only have at most two types of subfiles, and the optimal caching decision for type $t$ is given by:
\begin{equation*}
    x_t^* =
    \begin{cases*}
    1/a_t,       & \text{if } $\rho = \gamma_t$,\quad and \quad $ \alpha \geq \sigma_{t}$,\\
    P_1,       & \text{if } $\rho > \gamma_{t}$, \quad and \quad $\sigma_{t} \leq \alpha < \sigma_{t-1}$,\\  
    P_2,      & \text{if } $\gamma_t < \rho < \gamma_{t-1}$, \quad and \quad $ \alpha \geq \sigma_{t-1}$,\\ 
    P_3,    & \text{if } $\gamma_{t+1} < \rho < \gamma_{t}$, \quad and \quad $  \alpha \geq \sigma_{t}$,\\
    0,   & otherwise, \\
    \end{cases*}
\end{equation*}
where $\gamma_t = \dfrac{K-t}{t^\alpha (t+1) N}$, $\forall t \in {\cal K}$, and $\gamma_0 = 1$, $\sigma_{t} =\log_{\frac{t+1}{t}} \left( \dfrac{t+1}{t+2} \right)+1$, $\forall t \in {\cal K}\backslash\{K\}$, $\sigma_{0} =1$, and $\sigma_K=0$,\\$P_1 = \dfrac{K (t+1)}{ a_t \left(N  c_t (t+1) + t(K +1)\right)}$, $P_2 = \dfrac{(t+1)(t-K-1) + c_{t-1}N t(t+1) }{a_t \left(Nt(t+1) (c_{t-1} - c_{t}) - (K+1)\right)} $, \\and $P_3 = \dfrac{(t+1)(K-t-1) - c_{t+1}N (t+1)(t+2) }{a_t \left(N(t+1)(t+2) (c_t - c_{t+1}) - (K+1)\right)}$.
\end{theorem}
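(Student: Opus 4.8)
The plan is to read the optimization (\ref{Opt1})--(\ref{Opt111}) as a linear program in $\boldsymbol{x}=(x_0,\dots,x_K)$: after the reduction (\ref{simp1}) the objective $R_p$ and the constraints (\ref{eq2}), (\ref{eq1}), (\ref{rate}) are all affine, the feasible set is a nonempty pointed polyhedron (it contains $x_0=1,\ x_t=0$ for $t>0$, and lies in the nonnegative orthant) and $R_p\ge 0$ is bounded below, so an optimum is attained at a vertex. Passing to the variables $y_t=a_tx_t$ turns this into a linear program over the probability simplex $\{\sum_{t=0}^{K}y_t=1,\ y\ge 0\}$ with the single extra halfspace $\sum_{t=0}^{K}(Nc_t-b_t)y_t\le 0$ (with the harmless conventions $b_K=0$ and $c_0=0$). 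Since a vertex needs $K+1$ linearly independent active constraints, only one of which is the equality and at most one the rate inequality, at least $K-1$ of the nonnegativity constraints are active there; hence at most two of the $y_t$, equivalently at most two of the $x_t$, are nonzero. This is the ``at most two types'' assertion.

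Next I would pin down which vertex is optimal. Whenever $\rho>0$ the rate constraint (\ref{rate}) is tight at any optimum: otherwise the optimum would minimize $R_p$ over the simplex alone, forcing all the mass onto type $K$ (the unique minimizer of $b_t$), i.e.\ $x_K=1$, which violates (\ref{rate}) because then $R_o=N\rho K^{\alpha}>0=R_p$. So an optimal vertex is either a single type $t$ for which (\ref{rate}) is tight---which means $Nc_t=b_t$, i.e.\ $\rho=\gamma_t$, and gives $x_t^{*}=1/a_t$ with value $b_t$---or a genuine two-type vertex supported on some $s<u$ with (\ref{rate}) tight. In the two-type case one solves the $2\times 2$ system $a_sx_s+a_ux_u=1$, $N(a_sc_sx_s+a_uc_ux_u)=a_sb_sx_s+a_ub_ux_u$; using the identities $b_{t-1}-b_t=\frac{K+1}{t(t+1)}$ and $b_t-b_{t+1}=\frac{K+1}{(t+1)(t+2)}$ one checks that the pairs $(0,t)$, $(t-1,t)$ and $(t,t+1)$ produce exactly $x_t^{*}=P_1$, $P_2$ and $P_3$, with common value $N(b_sc_u-b_uc_s)\big/\bigl((b_s-b_u)+N(c_u-c_s)\bigr)$. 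Nonnegativity of both components forces $\rho$ to lie between $\gamma_u$ and $\gamma_s$; since $\gamma_t$ is decreasing in $t$, this is exactly what produces the $\gamma$-windows in the statement (and $\gamma_0=1$ records that type $0$ is always feasible).

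Third, I would compare the values of these finitely many candidate vertices to identify the optimum in each part of the $(\rho,\alpha)$-square. The decisive computation compares, at $\rho=\gamma_t$, the single-type-$t$ value $b_t$ with the value of the two-type vertex on types $0$ and $t+1$: after simplification the $\rho$-, $N$- and $K$-dependence cancels entirely, and the comparison reduces to whether $(t+1)^2t^{\alpha}\le t(t+2)(t+1)^{\alpha}$, i.e.\ whether $\bigl(\tfrac{t+1}{t}\bigr)^{\sigma_t}=\tfrac{(t+1)^2}{t(t+2)}\le\bigl(\tfrac{t+1}{t}\bigr)^{\alpha}$, i.e.\ whether $\alpha\ge\sigma_t$---and the single type $t$ wins exactly then. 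The same comparison, carried out between the pairs $\{t,t+1\}$ and $\{0,t+1\}$ and between $\{t-1,t\}$ and $\{0,t\}$, produces the thresholds $\sigma_t$ and $\sigma_{t-1}$ separating the $P_3$- and $P_2$-regions from the $P_1$-region. Combining this with the decreasing ordering of the $\gamma_t$, which fixes, for a given $\rho$, the feasible single types and hence the admissible slack/deficit pairs, one reads off: at $\rho=\gamma_t$ the single type $t$ is optimal iff $\alpha\ge\sigma_t$; on $\gamma_{t+1}<\rho<\gamma_t$ the pair $\{t,t+1\}$ is optimal iff $\alpha\ge\sigma_t$ (otherwise type $t$ drops out in favor of a pair involving type $0$, and the argument recurses upward in the type index); on $\gamma_t<\rho<\gamma_{t-1}$ the pair $\{t-1,t\}$ is optimal iff $\alpha\ge\sigma_{t-1}$, while $\{0,t\}$ takes over on $\sigma_t\le\alpha<\sigma_{t-1}$; and every remaining region corresponds to a vertex carrying no mass on type $t$, i.e.\ $x_t^{*}=0$.

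The hard part is this last step: organizing the comparison so that exactly the five regions of the statement emerge, checking that no competing vertex has been overlooked---so that a claimed optimum is genuinely optimal and a claimed $x_t^{*}=0$ is genuinely forced (here one uses, for instance, that a ``gap'' pair such as $\{t-1,t+1\}$ never beats the adjacent configurations, which follows because the relevant exponential sum in $\alpha$ has at most two real zeros)---and carrying out the sign algebra behind the $\sigma_t$ thresholds, whose independence of $\rho$, $N$ and $K$ is the crucial and least routine point. The monotonicity of $t\mapsto\gamma_t$ and $t\mapsto\sigma_t$, the concavity of $x\mapsto x^{\alpha}$ for $\alpha\in[0,1]$, and the boundary conventions $\gamma_0=\sigma_0=1$ and $\sigma_K=0$ are what keep the case split finite and clean; I would establish those monotonicities first and then argue region by region.
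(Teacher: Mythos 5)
Your proposal is correct and follows essentially the same route as the paper: recast the problem as an LP, invoke the vertex/basic-solution argument to get at most two nonzero types, enumerate the corner points (which are exactly the paper's $1/q_t$ points and the intersection points, yielding $P_1$, $P_2$, $P_3$), and resolve the pairwise comparisons, which collapse to the $\rho$-windows $\gamma_t$ and the $\rho,N,K$-independent thresholds $\sigma_t$. The only differences are cosmetic --- you keep $y_0$ explicit and organize the cases via tightness of the rate constraint rather than the paper's three regimes determined by $q_t\lessgtr 1$ (an equivalent parameterization, since $q_t\leq 1\iff\rho\leq\gamma_t$).
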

{\bf Proof:} We note that the optimization problem described in (\ref{Opt1}) and (\ref{Opt111}) is a Linear Programming (LP) problem. 
We start by introducing a new variable $y_t$ which can be described as the ratio between the total number of bits used for a single type $t$ and the total number of bits in the file. More precisely, 
\begin{equation}
   y_t = a_t x_t, \quad \forall t \in \mathcal{K}.
\end{equation}
Moreover, we denote $y_0$ as the fraction occupied by the reactive part of the file, which is not being cached at any user during the placement phase and will only be sent during the delivery phase. From constraint \eqref{eq2}, we have 
\begin{equation}\label{y0}
    y_0 = 1 - \sum \limits_{t=1}^{K} y_t.
\end{equation}
We use \eqref{y0} to substitute $y_0$ and represent our optimization problem in terms of the caching variables \{$ y_t, \quad t \in \mathcal{K}$\}.

Hence, we reformulate the original problem after simple manipulation as:
\begin{align}
    &\max_{\{y_t\}_{t=1}^{K}}  \qquad   \sum^{K}_{t=1}\dfrac{t}{(t+1)} y_{t}, \label{lp11}\\ 
    &\text{subject to} \quad  \sum_{t=1}^{K}q_t y_t\leq 1, \label{lp22}\\
    & \quad\qquad\qquad  \sum^{K}_{t=1} y_t \leq  1,  \label{lp33}\\
    & \qquad\qquad\quad 0 \leq y_t,    \quad\forall t \in {\cal K},\label{lp44}
\end{align}
where $q_t = \dfrac{c_tN(t+1)+t(K+1)}{K(t+1)} $. The optimization problem here is also a Linear Programming (LP) problem where $\{y_t, t \in {\cal K}\}$  are our decision variables. Aside from the non-negativity constraint of \eqref{lp44}, we have only two constraints \eqref{lp22} and \eqref{lp33} that limit our feasible region. From the Fundamental Theorem of Linear Programming \cite{lin1schrijver1998theory,lin2vanderbei2015linear, lin3matousek2007understanding}, the maximum number of non-zero variables in an optimal solution for the optimization problem is the same as the number of linearly independent columns of the constraints coefficient matrix, which is at most two in this case. The problem can hence be simplified by having only two caching types of subfiles with non-zero size.
Now, we rewrite our problem in terms of two variables $i,j \in \mathcal{K}$. Then, we have:
 \begin{align}
 &\max_{y_i, y_j, i, j}  \quad \dfrac{i}{(i+1)} y_{i} + \dfrac{j}{(j+1)} y_{j}, \label{s1}\\ 
    &\text{subject to} \quad  q_i y_{i} + q_j y_{j} \leq 1,\label{s2}\\
    & \quad\qquad\qquad    y_{i} + y_{j} \leq 1 , \label{s3}\\
    & \quad\qquad\qquad y_i, y_j \geq 0. \label{s4}
\end{align}


Since $\alpha\leq 1$, we note that the first derivative of $q_t$ is positive and the second derivative is negative. Therefore, $q_t$ is a concave increasing function in $t$ resulting in the following cases, which will characterize the solution to our LP based on the parameters ($\rho, \alpha$).
\subsection{Cost limited regime}
We identify this case when \eqref{s2} is the only constraint that limits the feasible region of our optimization problem, that is when $\left\{ q_t > 1, \quad  \forall t \in \mathcal{K}\right\}$. A necessary and sufficient condition for this case to occur based on the analysis of $q_t$ is:
\begin{equation}
     q_1  > 1,
     \end{equation}
     or equivalently:
     \begin{equation}
    \rho  > \dfrac{K-1}{2N}.     
     \end{equation}
    
Under this condition, our optimization problem becomes to find the optimal type $t$ given only one constraint. The corner points from this single constraint are:
 \begin{equation}
    \left( y_t = \dfrac{K(t+1)}{N c_t (t+1)+ t(K+1)},\left(y_{\tilde{t}}=0, \forall \tilde{t} \neq t\right) \right), \quad \forall t \in \mathcal{K},\label{yt}
 \end{equation}
 and the optimization problem for this case is:
 \begin{align}
 &\max_{t \in \mathcal{K}} \quad \dfrac{t}{(t+1)} \dfrac{K(t+1)}{N c_t (t+1)+ t(K+1)}.
\end{align}
The optimal type $t$ for this case can be identified based on the following range of $\alpha$ (see \cite{y2019jointly} for details):
\begin{equation}
    \sigma_{t} \leq \alpha < \sigma_{t-1}. \label{alphahat}
\end{equation}

\subsection{Free placement regime} 
We identify this case when \eqref{s3} is the only constraint that limits the feasible region of our optimization problem, that is when $\left\{ q_t \leq 1, \quad  \forall t \in \mathcal{K}\right\}$. A necessary and sufficient condition for this case to occur is:
\begin{equation}
      q_K  \leq 1,\\
      \end{equation}
      or equivalently,
      \begin{equation}
    \rho  = 0.
\end{equation}
The corner points from the single constraint in this case are given by $\left(y_t = 1, \left(y_{\tilde{t}}=0, \forall \tilde{t} \neq t\right)\right), \forall t \in \mathcal{K}$, and the optimization problem for this case is:
 \begin{align}\label{case2}
 &\max_{t \in \mathcal{K}} \quad \dfrac{t}{(t+1)}.
\end{align}

As the objective function described in \eqref{case2} is increasing with $t$, the optimal caching type when $\rho = 0$ is $t = K$ (uncoded delivery) regardless of the value of $\alpha$.
\subsection{Architecture limited regime} We identify this case when  $\{ q_t \leq 1, \quad  \forall t \in \{1, .. , a\}\}$ while $\{ q_t > 1, \quad  \forall t \in \{b, .. , K\}\}$, where $b=a+1\leq K$. The necessary and sufficient condition for this case to occur is:
\begin{align}
    q_a &\leq 1 < q_b\\
    \iff \dfrac{K-b}{b^\alpha(b+1)N} &< \rho \leq \dfrac{K-a}{a^\alpha(a+1)N}. \label{case_c}
\end{align}

Under this condition, the corner points can be classified into three sets. First, for any type $i \leq a$ we have the corner points: 
\begin{align}
    \left(y_i = 1, \left(y_{\tilde{t}}=0, \forall \tilde{t} \neq i\right)\right), \quad  \forall i \in \{1, .. , a\},
\end{align}
where constraint \eqref{s3} is limiting the feasible region. As in Case B, we can eliminate any type $i<a$ as type $a$ is the maximizer for the objective function.

For the second set with types $j \geq b$, we have the corner points:
 \begin{equation}
    \left( y_j = \dfrac{K(j+1)}{N c_j (j+1)+ j(K+1)},\left(y_{\tilde{t}}=0, \forall \tilde{t} \neq j\right) \right), \quad \forall j \in \{b, .. , K\},
 \end{equation} 
 where constraint \eqref{s2} is limiting the feasible region for this set. The optimization problem for this case is:
 \begin{align}\label{ytcase3}
 &\max_{j \in \{b,...,K\}
 } \quad \dfrac{j}{(j+1)} \dfrac{K(j+1)}{N c_j (j+1)+ j(K+1)}.
\end{align}
As in Case A, the optimal caching type $j$, within this set, is decided according to the following range of $\alpha$:
\begin{equation}\label{eq:joptimal}
    \sigma_{j} \leq \alpha < \sigma_{j-1},
\end{equation}
when $j > b$, and type $b$ is optimal when
\begin{equation}\label{eq:boptimal}
\alpha \geq \sigma_{b}.
\end{equation}
 
The third set consists of the intersection points of the form $\left(y_i=\tilde{y_i}>0, y_j=\tilde{y_j}>0,\left(y_k=0, \forall k \notin\{i,j\}\right)\right)$ (for brevity, we call such point $(\tilde{y_i},\tilde{y_j})$) between the feasible regions for constraints \eqref{s2} and \eqref{s3}. We simplify the problem of characterizing this last set through the following two claims.

\begin{claim}\label{cl2}
For the intersection points $(\tilde{y}_i, \tilde{y}_j)$ where $i\leq a$ is fixed, and we can vary $j\geq b$, the maximum of the objective function satisfies $j=b$.
\end{claim}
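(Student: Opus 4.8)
The plan is to write the intersection point in closed form, substitute it into the objective, and then reduce the whole claim to a single convexity statement about the map $t\mapsto t^{\alpha}$ expressed in the variable $t/(t+1)$.

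First I would intersect \eqref{s2} and \eqref{s3}, both taken with equality: from $\tilde y_i+\tilde y_j=1$ and $q_i\tilde y_i+q_j\tilde y_j=1$ one solves $\tilde y_j=\frac{1-q_i}{q_j-q_i}$ and $\tilde y_i=\frac{q_j-1}{q_j-q_i}$. In the architecture-limited regime $q_i\le 1$ (since $i\le a$) and $q_j>1$ (since $j\ge b$), so $q_j-q_i>0$ and both coordinates are nonnegative; hence this is a genuine feasible corner point for every $j\in\{b,\dots,K\}$, and these are exactly the points being compared. The objective value there is $G:=\frac{i}{i+1}\tilde y_i+\frac{j}{j+1}\tilde y_j$.

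The engine of the argument is the identity $q_t=\frac{\rho N}{K}\,t^{\alpha}+\frac{K+1}{K}\cdot\frac{t}{t+1}$, obtained by plugging $c_t=\rho t^{\alpha}$ into the definition of $q_t$. Writing $g_t:=\frac{t}{t+1}$, this identity shows that on the face where \eqref{s2} is tight the quantities $G$ and $M:=i^{\alpha}\tilde y_i+j^{\alpha}\tilde y_j$ satisfy $\frac{\rho N}{K}M+\frac{K+1}{K}G=q_i\tilde y_i+q_j\tilde y_j=1$, i.e. $G=\frac{K}{K+1}-\frac{\rho N}{K+1}M$. Since $\rho>0$ in this regime, maximizing $G$ over $j$ is equivalent to minimizing $M=i^{\alpha}+(1-q_i)\frac{j^{\alpha}-i^{\alpha}}{q_j-q_i}$; and using the identity once more gives $\frac{j^{\alpha}-i^{\alpha}}{q_j-q_i}=\bigl(\frac{\rho N}{K}+\frac{K+1}{K}\cdot\frac{g_j-g_i}{j^{\alpha}-i^{\alpha}}\bigr)^{-1}$. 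So the problem collapses to showing that the chord slope $\frac{g_j-g_i}{j^{\alpha}-i^{\alpha}}$ is non-increasing in $j$ for $j>i\ge 1$: then $M$ is non-decreasing in $j$, $G$ is non-increasing in $j$, and the maximum over $j\ge b$ is attained at $j=b$.

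That last step is the crux, and I would dispatch it by the change of variable $x=t^{\alpha}$, $y=g_t$: since $t=\frac{y}{1-y}$, we have $x=\bigl(\frac{y}{1-y}\bigr)^{\alpha}$, and a short computation shows its second derivative (in $y$) has the same sign as $2y+\alpha-1$, so $x$ is a convex increasing function of $y$ precisely for $y\ge\frac{1-\alpha}{2}$. Because $i\ge 1$ forces $g_t\ge g_1=\tfrac12\ge\frac{1-\alpha}{2}$ for every $t$ in the range in play, convexity holds throughout; hence $y=g_t$ is a concave function of $x=t^{\alpha}$ there, so the chord slope from the fixed point $(i^{\alpha},g_i)$ to the moving point $(j^{\alpha},g_j)$ decreases as $j$ grows, which is exactly what is needed. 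The degenerate cases $\alpha=0$ and $q_i=1$ (where $M$, hence $G$, is constant in $j$) are immediate. The main obstacle here is not computation but modeling: spotting that the right quantity to monotone-track is the slope of $g_t$ against $t^{\alpha}$, and verifying that all the admissible values of $g_t$ lie inside the convexity window $y\ge\frac{1-\alpha}{2}$; once those two facts are in hand the rest is routine bookkeeping.
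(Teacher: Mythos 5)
Your proposal is correct, and it actually supplies the step the paper leaves out. The paper's own argument is only a sketch: it notes $q_i\le 1<q_j$, that the intersection points satisfy \eqref{s2} and \eqref{s3} with equality, that $\tilde y_j=\frac{1-q_i}{q_j-q_i}$ decreases in $j$, and then defers to an omitted ``straightforward first derivative computation.'' The decrease of $\tilde y_j$ alone is not conclusive, since the weight $\frac{j}{j+1}$ grows with $j$; the real content is that $\frac{g_j-g_i}{q_j-q_i}$ (with $g_t=\frac{t}{t+1}$) is non-increasing in $j$, which is exactly what your argument establishes. Your route differs from (and fleshes out) the paper's: the affine decomposition $q_t=\frac{\rho N}{K}t^{\alpha}+\frac{K+1}{K}g_t$ turns the comparison into monotonicity of the chord slope of $g_t$ against $t^{\alpha}$, and the convexity computation (sign of $2y+\alpha-1$, with $g_t\ge\frac12\ge\frac{1-\alpha}{2}$ for $t\ge1$) settles it for all $0\le\alpha\le1$; your checks that $\rho>0$ in this regime, that both coordinates of the corner are nonnegative, and the degenerate cases $\alpha=0$ and $q_i=1$ are all sound (ties still leave $j=b$ as a maximizer, which is all the claim needs). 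What the paper's derivative sketch buys is brevity; what your chord-slope/concavity argument buys is a complete verification with a reusable structural fact — the same decomposition immediately yields Claim \ref{cl3} as well, since for fixed $j$ the objective equals $g_j-(q_j-1)\frac{g_j-g_i}{q_j-q_i}$ and the identical slope monotonicity (now in $i$) places the maximum at $i=a$.
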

\begin{IEEEproof}
The proof follows from observing that $q_i \leq 1$, $q_j>1$, the considered intersection points tightly satisfy \eqref{s2} and \eqref{s3}, and the fact that $q_j$ is monotonically increasing as $j$ increases, which implies that the value of $\tilde{y}_j$ is monotonically decreasing as $j$ increases, as the constraint \eqref{s2} is tightly met.
A straightforward objective first derivative computation would show that the statement holds. Details are omitted for brevity.

\end{IEEEproof}

\begin{claim}\label{cl3}
For the intersection points $(\tilde{y}_i, \tilde{y}_j)$ where $j\geq b$ is fixed, and we can vary $i\leq a$, the maximum of the objective function satisfies $i=a$.
\end{claim}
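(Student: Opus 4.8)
The plan is to write the relevant intersection point in closed form, reduce the claim to one-dimensional monotonicity in $i$ with $j$ frozen, and then recognize that monotonicity as a chord-slope property of a concave curve.

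First I would eliminate variables. Since the intersection point makes both \eqref{s2} and \eqref{s3} tight, solving $q_i\tilde y_i+q_j\tilde y_j=1$, $\tilde y_i+\tilde y_j=1$ gives $\tilde y_i=\frac{q_j-1}{q_j-q_i}$ and $\tilde y_j=\frac{1-q_i}{q_j-q_i}$, and the inequalities $q_i\le 1<q_j$ that define this regime force $\tilde y_i,\tilde y_j\in[0,1]$, so the point is a genuine feasible corner point. Substituting into the objective and using $\tilde y_i+\tilde y_j=1$ collapses its value at this point to
\begin{equation}\label{eq:cl3obj}
f(i)\;=\;g_j\;-\;(q_j-1)\,\frac{g_j-g_i}{q_j-q_i},\qquad g_t:=\tfrac{t}{t+1},
\end{equation}
with $j$ fixed (recall $j\ge b$). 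Because $q_j-1>0$ is constant in this regime, maximizing $f(i)$ over $i\in\{1,\dots,a\}$ is exactly minimizing the chord slope $\frac{g_j-g_i}{q_j-q_i}$ of the planar curve $t\mapsto(q(t),g(t))$ between the abscissas $q_i$ and $q_j$.

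I would then relax $i$ to a real variable on $[1,a]$. Splitting the fraction defining $q_t$ and using $c_t=\rho t^{\alpha}$ gives the clean parametrization $q(t)=\tfrac{\rho N}{K}\,t^{\alpha}+\tfrac{K+1}{K}\,g(t)$, so $q$ is affine in $g$ apart from the single term $\tfrac{\rho N}{K}t^{\alpha}$. A chain-rule computation gives $\dfrac{d^{2}g}{dq^{2}}=\dfrac{g''q'-g'q''}{(q')^{3}}$ (primes in $t$); the two $\tfrac{K+1}{K}g'g''$ contributions cancel, leaving $g''q'-g'q''=\tfrac{\rho N}{K}\,\alpha\,t^{\alpha-2}\bigl(t\,g''+(1-\alpha)g'\bigr)$, and substituting $g'=(t+1)^{-2}$, $g''=-2(t+1)^{-3}$ turns the bracket into $\dfrac{(1-\alpha)-(1+\alpha)t}{(t+1)^{3}}$, which is $\le 0$ for every $t\ge 1$ since $\alpha\in[0,1]$. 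Hence $\dfrac{d^{2}g}{dq^{2}}\le 0$: $g$ is concave as a function of $q$ on the range in question, so its chord slopes are non-increasing as the left endpoint moves right; since $q(\cdot)$ is strictly increasing, $\frac{g_j-g_i}{q_j-q_i}$ is non-increasing in $i$, $f(i)$ in \eqref{eq:cl3obj} is non-decreasing in $i$, and the maximum over the integer types $\{1,\dots,a\}$ is therefore attained at $i=a$. (When $\alpha=0$ the extra term drops out, $f$ is constant in $i$, and the conclusion still holds.)

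The step I expect to be the crux is the concavity of $g$ as a function of $q$: knowing only that $q_t$ and $g_t$ are each concave and increasing in $t$ is not enough, since a concave function of $t$ need not be a concave function of another concave function of $t$, so the argument must genuinely exploit the explicit affine-in-$g$ form of $q_t$ --- that is what makes the awkward $g'g''$ terms cancel and leaves a residual whose sign is controlled by the elementary inequality $(1-\alpha)-(1+\alpha)t\le 0$ for $t\ge 1$. A purely discrete alternative --- cross-multiplying $f(i+1)\ge f(i)$ and clearing the two positive denominators --- would also work, but yields a bulkier polynomial inequality that reduces to the same fact, so the continuous chord-slope phrasing is the one I would write up.
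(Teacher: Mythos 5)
Your proof is correct, and it supplies exactly the detail the paper leaves out. The paper disposes of Claim \ref{cl3} as ``analogous to Claim \ref{cl2}'', whose own proof only observes monotonicity of $q_t$ and of the tight intersection point, and then defers to an omitted ``first derivative computation'' of the objective. Your argument follows the same broad strategy --- treat the varied type as a continuous parameter and track the objective at the point where \eqref{s2} and \eqref{s3} are both tight --- but organizes and completes the calculus step differently: you solve the $2\times 2$ system in closed form, rewrite the objective value as $g_j-(q_j-1)\frac{g_j-g_i}{q_j-q_i}$ with $g_t=t/(t+1)$, and reduce the claim to monotonicity of a chord slope, which you derive from concavity of $g$ as a function of $q$ using the parametric second-derivative formula together with the decomposition $q(t)=\frac{\rho N}{K}t^{\alpha}+\frac{K+1}{K}g(t)$. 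This is a genuine improvement in rigor: the concavity of $g$ in $q$ does \emph{not} follow from the paper's remark that $q_t$ is concave and increasing in $t$ (composing concave increasing maps does not preserve concavity in the needed direction), so the cancellation afforded by the affine-in-$g$ structure of $q_t$, which leaves the elementary sign condition $(1-\alpha)-(1+\alpha)t\le 0$ for $t\ge 1$, is precisely the missing ingredient. Your closed-form intersection point matches the $(y_i^*,y_j^*)$ expression in the theorem statement, and your treatment of the degenerate $\alpha=0$ case (objective constant in $i$, maximum still attained at $i=a$) keeps the claim valid there as well.
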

\begin{IEEEproof}
The proof follows in an analogous fashion to the proof of Claim \ref{cl2}, and is omitted for brevity.
\end{IEEEproof}
From Claim~\ref{cl2} and \ref{cl3}, we can eliminate all intersections except the corner point $(\tilde{y}_a, \tilde{y}_{b})$. Next, we compare the corner point $(\tilde{y}_a, \tilde{y}_{b})$ with $\left(y_a = 1, \left(y_t=0, \forall t \neq a\right)\right) $ and $\left(y_b = 1/q_b, \left(y_{\tilde{t}}=0, \forall \tilde{t} \neq b\right)\right)$ to find when each corner point maximizes our objective function.

\begin{claim}
The corner point $(\tilde{y}_a, \tilde{y}_{b})$ is the maximizer compared with the corner point $\left(y_a = 1, \left(y_t=0, \forall t \neq a\right)\right)$.
\end{claim}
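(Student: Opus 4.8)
The plan is to compare the objective value at the interior intersection point $(\tilde{y}_a, \tilde{y}_{b})$ with the objective value at the vertex $(y_a=1)$ and show the former dominates precisely in the parameter regime where the intersection point is a feasible vertex of the LP (i.e. when it actually lies on the boundary and is not dominated). First I would write down $(\tilde{y}_a, \tilde{y}_b)$ explicitly by solving the two-by-two linear system obtained from forcing both \eqref{s2} and \eqref{s3} to hold with equality: $\tilde{y}_a + \tilde{y}_b = 1$ and $q_a \tilde{y}_a + q_b \tilde{y}_b = 1$. Subtracting gives $\tilde{y}_b = \frac{1-q_a}{q_b - q_a}$ and $\tilde{y}_a = \frac{q_b-1}{q_b-q_a}$; note that in the architecture-limited regime $q_a \le 1 < q_b$ so both coordinates are in $[0,1]$ and the point is genuinely feasible, and the denominator $q_b - q_a > 0$ since $q_t$ is strictly increasing.

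Next I would form the difference of objective values $\Delta \triangleq \left(\frac{a}{a+1}\tilde{y}_a + \frac{b}{b+1}\tilde{y}_b\right) - \frac{a}{a+1}$. Using $\tilde{y}_a = 1 - \tilde{y}_b$, this collapses to $\Delta = \tilde{y}_b\left(\frac{b}{b+1} - \frac{a}{a+1}\right)$. Since $b = a+1 > a$, the bracketed factor $\frac{b}{b+1}-\frac{a}{a+1}$ is strictly positive, and $\tilde{y}_b \ge 0$ in this regime, so $\Delta \ge 0$, with strict inequality exactly when $\tilde{y}_b > 0$, i.e. when $q_a < 1$ strictly. Thus the intersection corner point always weakly dominates the vertex $(y_a=1)$ in the architecture-limited regime, which is the content of the claim. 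I would then translate the condition $q_a < 1$ into the corresponding statement on $(\rho,\alpha)$ using \eqref{case_c}, and remark that on the boundary $q_a=1$ the two points coincide so the "maximizer" statement still holds (trivially, with equality).

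The main obstacle is essentially bookkeeping rather than conceptual: one must be careful that the claimed intersection point is the \emph{relevant} one — that is, that after the reductions from Claims \ref{cl2} and \ref{cl3} the only surviving intersection vertex is $(\tilde{y}_a,\tilde{y}_b)$, and that $q_{a}-q_{b}\neq 0$ so the system is nondegenerate (guaranteed by strict monotonicity of $q_t$, already established from the sign of its first derivative). A secondary subtlety is that we want this to feed cleanly into identifying which of the three competitors $\{(y_a=1),\,(y_b=1/q_b),\,(\tilde{y}_a,\tilde{y}_b)\}$ is the global maximizer; the present claim disposes of the comparison with $(y_a=1)$, and the symmetric comparison with $(y_b=1/q_b)$ (which I expect to be handled in a companion claim) will require the analogous computation $\Delta' = \tilde{y}_a\big(\frac{a}{a+1} - \frac{b}{b+1}\big) + (\text{terms from } q_b \neq 1)$, whose sign is \emph{not} automatically fixed and is exactly what carves out the $P_2$ versus $P_1$ boundary in Theorem \ref{the1}. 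But for the statement at hand, the short computation $\Delta = \tilde{y}_b\left(\frac{b}{b+1}-\frac{a}{a+1}\right) \ge 0$ suffices.
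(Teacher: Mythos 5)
Your proof is correct and is essentially the paper's own argument: both rest on the single observation that, since constraint \eqref{s3} is tight at the intersection point, the objective there is a convex combination of $\tfrac{a}{a+1}$ and $\tfrac{b}{b+1}$ with $\tfrac{b}{b+1}>\tfrac{a}{a+1}$, hence it dominates $f(1,0)=\tfrac{a}{a+1}$. Your version is only slightly more explicit (writing out $\tilde{y}_a,\tilde{y}_b$ and noting the degenerate boundary case $q_a=1$, where the two points coincide and the inequality becomes an equality), which the paper glosses over with a strict inequality.
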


\begin{proof}
Recall that constraint \eqref{s2} intersects with constraint \eqref{s3} at $(\tilde{y_a}, \tilde{y_b})$.
Now, we evaluate the two candidate solutions at our objective function in \eqref{s1} to find which one maximizes it. Starting with the point $\left(y_a = 1, \left(y_t=0, \forall t \neq a\right)\right)$, and using $f(\tilde{y}_i,\tilde{y_j})$ to denote the objective function at point $(\tilde{y}_i,\tilde{y}_j)$,
\begin{align}
    f(1,0) & = \dfrac{a}{a+1}\\
    & = \dfrac{a}{a+1}y_a + \dfrac{a}{a+1}(1-y_a)\\
    & < \dfrac{a}{a+1}y_a + \dfrac{a+1}{a+2}(1-y_a)\\
     & = f(\tilde{y}_a, \tilde{y}_{a+1}).
\end{align}
which makes $(\tilde{y}_a, \tilde{y}_{a+1})$ the maximizer of the objective function among these two candidate solutions for any $\rho$ and $\alpha$. 
\end{proof}

Finally, we compare the corner point $(\tilde{y}_a, \tilde{y}_{b})$ with $\left(y_b = 1/q_b, \left(y_{\tilde{t}}=0, \forall \tilde{t} \neq b\right)\right)$ at our objective function \eqref{s1}. 
We also use $f(\tilde{y}_i,\tilde{y_j})$ here to denote the objective function:
\begin{flalign}
    f(\tilde{y}_a, \tilde{y}_{a+1})  &\gtrless f\left(0 , \frac{1}{q_{a+1}}\right) \\
    \Leftrightarrow \left(\dfrac{a}{a+1}\right) \tilde{y}_a + \left(\dfrac{{a+1}}{a+2}\right) \tilde{y}_{a+1}    &\gtrless \left(\dfrac{{a+1}}{a+2}\right) \dfrac{1}{q_{a+1}}\\
   \Leftrightarrow \left(\dfrac{a(a+2)}{(a+1)(a+1)}\right)\dfrac{q_{a+1} - 1}{q_{a+1} - q_a}    &\gtrless \dfrac{}{} \left( \dfrac{1}{q_{a+1}} - \dfrac{1-q_a}{q_{a+1}-q_a}\right) \\
    \Leftrightarrow q_{a+1} -1   &\gtrless \left(\dfrac{(a+1)^2}{a(a+2)}\right)\dfrac{q_a(q_{a+1}-1)}{q_{a+1}}\\
   \Leftrightarrow \left(\dfrac{a}{a+1}\right)\dfrac{1}{q_a}  &\gtrless \left(\dfrac{a+1}{a+2}\right)\dfrac{1}{q_{a+1}} \\
    \Leftrightarrow \alpha   &\gtrless 
    \log_{\frac{a+1}{a}} \left( \dfrac{a+1}{a+2} \right) +1.
\end{flalign}

It follows that the condition for $(\tilde{y}_a, \tilde{y}_{b})$ to be the optimal caching decision is:
\begin{align}
    \alpha \geq \log_{\frac{a+1}{a}} \left( \dfrac{a+1}{a+2} \right)+1=\sigma_a.
\end{align}
Hence, it follows from \eqref{eq:boptimal} that $\left(y_b = 1/q_b, \left(y_{\tilde{t}}=0, \forall \tilde{t} \neq b\right)\right)$ is optimal when $\sigma_{b} \leq \alpha < \sigma_{a}$, and more generally, it follows from \eqref{eq:joptimal} that $\left(y_j = 1/q_j, \left(y_{\tilde{t}}=0, \forall \tilde{t} \neq j\right)\right)$ is optimal when $\sigma_{j} \leq \alpha < \sigma_{j-1}$, for every $j \geq b$.
The optimal solution can be concluded for \eqref{case_c} as:
\begin{equation*}
    (y^*_i,y^*_j) =
    \begin{cases*}
    \left(\dfrac{q_{a+1} - 1}{q_{a+1} - q_a},\dfrac{1-q_a}{q_{a+1} - q_a}\right),    
    \quad  \text{if } \alpha \geq \sigma_{a},\\
    \left(0 , \dfrac{1}{q_j}\right), 
        \quad  \text{if } \sigma_{j} \leq \alpha < \sigma_{j-1}, \quad j\in\{b,\cdots,K\}.
    \end{cases*}
\end{equation*}

By characterizing the solution for all cases of the constraints, we have completed the proof of Theorem \ref{the1}.

\begin{corollary}
For the worst case scenario, uncoded delivery is optimal with caching decision $t=K$, for any given $\rho$, if:
\begin{align}
    \alpha \leq \log_{\frac{K}{K-1}} \left( \dfrac{K}{K+1} \right)+1.
\end{align}
\end{corollary}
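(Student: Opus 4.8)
The plan is to read the corollary off the complete characterization in Theorem~\ref{the1}, distinguishing the strict case $\alpha<\sigma_{K-1}$ from the boundary $\alpha=\sigma_{K-1}$. The only structural fact I need about the thresholds is monotonicity, $\sigma_0=1>\sigma_1>\cdots>\sigma_{K-1}>\sigma_K=0$; this is already forced by the proof, since the half-open intervals $[\sigma_t,\sigma_{t-1})$ tile $[0,1)$. Consequently $\sigma_s\ge\sigma_{K-1}$ for every $s\le K-1$, and putting $t=K-1$ in the definition of $\sigma_t$ recovers $\sigma_{K-1}=\log_{\frac{K}{K-1}}\big(\frac{K}{K+1}\big)+1$, the threshold in the statement.

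First I would handle the strict case $\alpha<\sigma_{K-1}$ directly from the closed form of Theorem~\ref{the1}. For any type $t\in\{1,\dots,K-1\}$, each branch producing a nonzero $x_t^*$ requires either $\alpha\ge\sigma_t$ (the $1/a_t$, $P_1$ and $P_3$ branches) or $\alpha\ge\sigma_{t-1}$ (the $P_2$ branch); since $t\le K-1$ (hence also $t-1\le K-1$) gives $\sigma_t,\sigma_{t-1}\ge\sigma_{K-1}>\alpha$, every such branch is inactive, so $x_t^*=0$ for all $t\le K-1$, while the $P_1$ branch (or, when $\rho=0$, the $1/a_t$ branch) for $t=K$ is active and gives $x_K^*>0$. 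Hence the optimal placement caches only type-$K$ subfiles, and by \eqref{Rp} the delivery phase carries only the uncached fraction $x_0$ — uncoded delivery with caching decision $t=K$.

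Next I would settle the boundary $\alpha=\sigma_{K-1}$ by a limiting argument on the equivalent LP \eqref{lp11}--\eqref{lp44}. Fix any $\rho\in[0,1]$. The coefficients $q_t$ in \eqref{lp22} are continuous bounded functions of $\alpha$, and the feasible polytope always lies in the standard simplex $\{\sum_t y_t\le1,\ y\ge0\}$ and is nonempty, so the optimal value $V(\alpha)$ depends continuously on $\alpha$ (standard for a parametric LP with a compact, continuously varying feasible set). By the previous paragraph, for $\alpha<\sigma_{K-1}$ an optimal point has $y_t=0$ for $t\neq K$, and maximizing $\tfrac{K}{K+1}y_K$ over the remaining constraints $q_K y_K\le1$, $y_K\le1$, $y_K\ge0$ gives $V(\alpha)=\tfrac{K}{K+1}\min\{1,1/q_K(\alpha)\}$. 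Letting $\alpha\uparrow\sigma_{K-1}$ and using continuity of $V$ and of $q_K$, $V(\sigma_{K-1})=\tfrac{K}{K+1}\min\{1,1/q_K(\sigma_{K-1})\}$, which is exactly the objective value of the feasible point $\big(y_K=\min\{1,1/q_K(\sigma_{K-1})\},\ y_t=0\ \forall t\neq K\big)$. Hence that point — uncoded delivery with $t=K$ — is optimal at $\alpha=\sigma_{K-1}$ as well, completing the proof.

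I expect the main subtlety to be conceptual rather than computational: the corollary must be read as ``some optimal scheme has caching decision $t=K$'', not ``$t=K$ is the unique optimum'', because on the line $\alpha=\sigma_{K-1}$ the formula of Theorem~\ref{the1} may instead return $x_{K-1}^*=1/a_{K-1}$ (at $\rho=\gamma_{K-1}$) or a two-type point $(\tilde y_{K-1},\tilde y_K)$ (for $0<\rho<\gamma_{K-1}$), and the whole purpose of the limiting argument is to show these tie with the pure type-$K$ scheme. Apart from that, the heavy lifting is already inside Theorem~\ref{the1}: the closed form settles the strict case, and the crossing identity $\tfrac{t}{t+1}\tfrac1{q_t}\gtrless\tfrac{t+1}{t+2}\tfrac1{q_{t+1}}\iff\alpha\gtrless\sigma_t$ from its proof is what makes the boundary ties exact, so no new computation is required.
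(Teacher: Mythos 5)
Your proposal is correct and follows essentially the same route as the paper, which simply reads the corollary off the case structure of Theorem~\ref{the1}: for $\alpha<\sigma_{K-1}$ every branch giving $x_t^*>0$ with $1\le t\le K-1$ is inactive, leaving only type $K$ (plus the uncached part), i.e., uncoded delivery. Your additional treatment of the boundary $\alpha=\sigma_{K-1}$ (via the LP value's continuity, or equivalently the exact tie $\tfrac{t}{t+1}\tfrac{1}{q_t}=\tfrac{t+1}{t+2}\tfrac{1}{q_{t+1}}$ at $\alpha=\sigma_t$ from the theorem's proof) is a sound refinement of a point the paper leaves implicit in stating the condition with ``$\leq$''.
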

Interestingly, for any $\rho > 0$, the optimality of uncoded delivery depends only on the network architecture. That makes coded caching ineffectual when the network architectures during the placement and delivery phases are the same, as uncoded delivery is optimal when $\alpha=0$.
\section{Numerical Results}
\begin{figure}[!ht]
    \centering
    \includegraphics[width=0.7\textwidth]{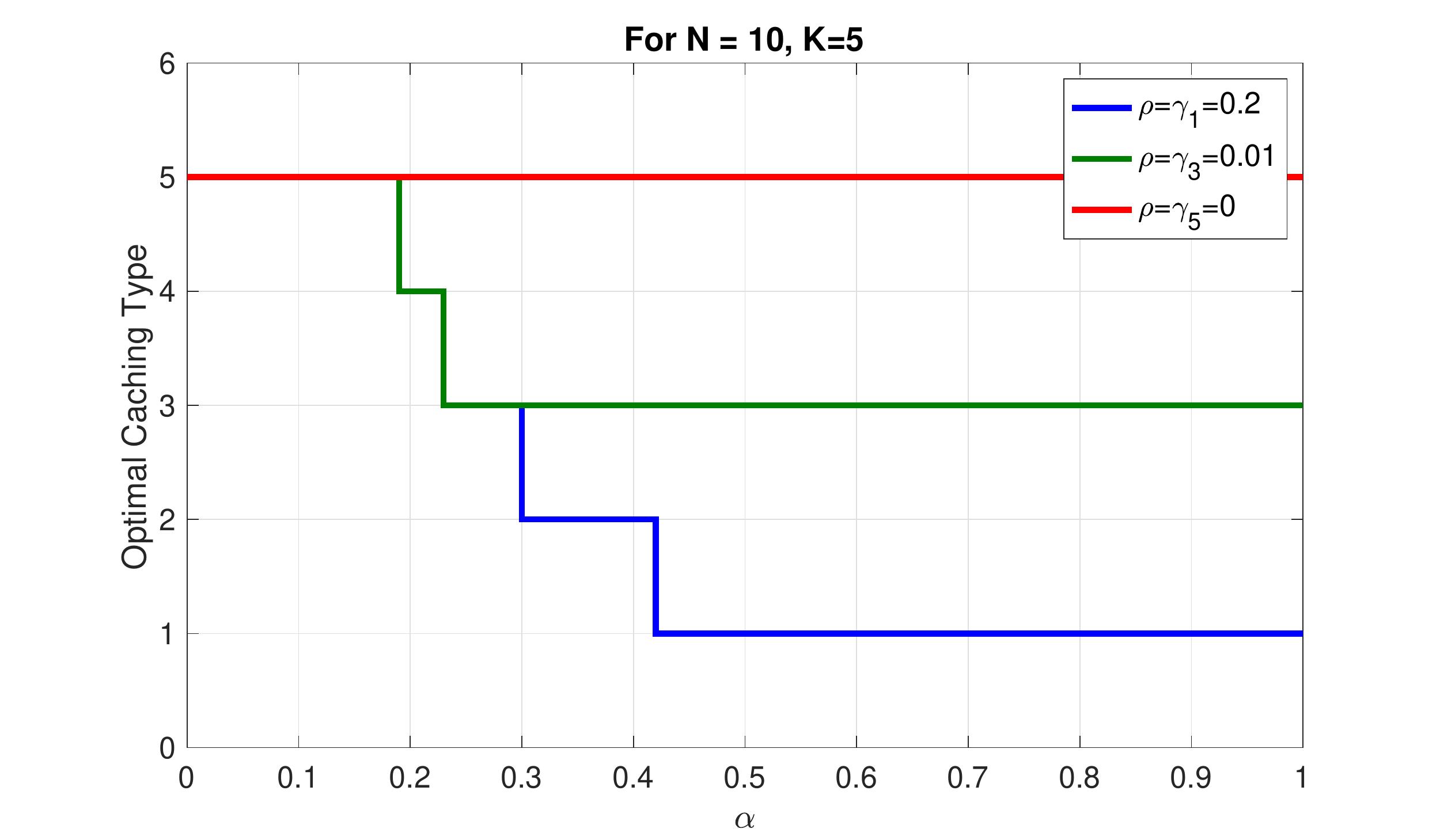}
    \caption{Effect of $\alpha$ and $\rho$  on the optimal caching type. }
    \label{fgr:costtype}
\end{figure}
\begin{figure}[ht]
    \centering
    \includegraphics[width=.7\textwidth]{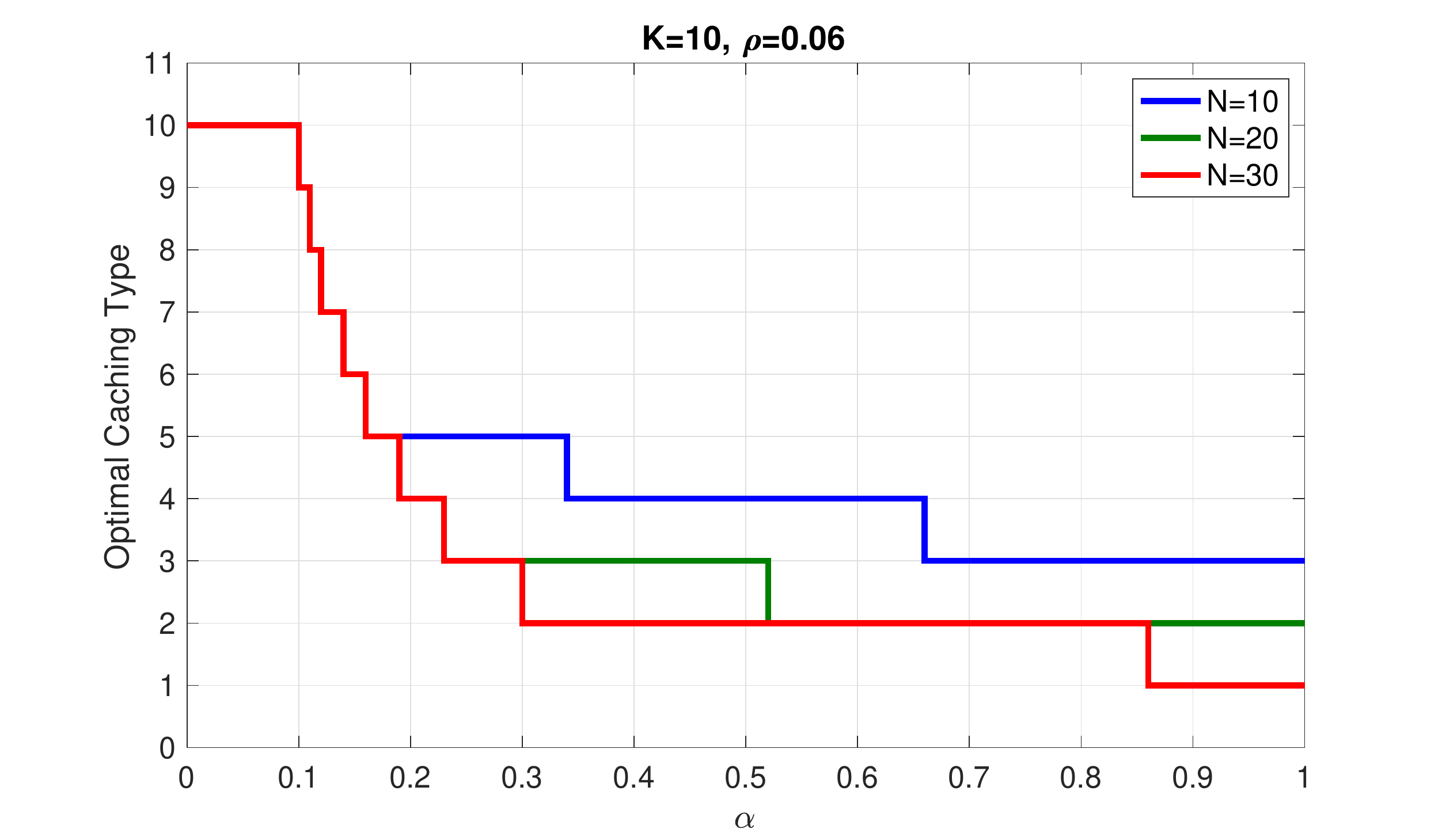}
    \caption{Effect of $N$ and $\alpha$  on the optimal caching type. }
    \label{fgr:costtype2}
\end{figure}

\begin{figure}[ht]
    \centering
    \includegraphics[width=.7\textwidth]{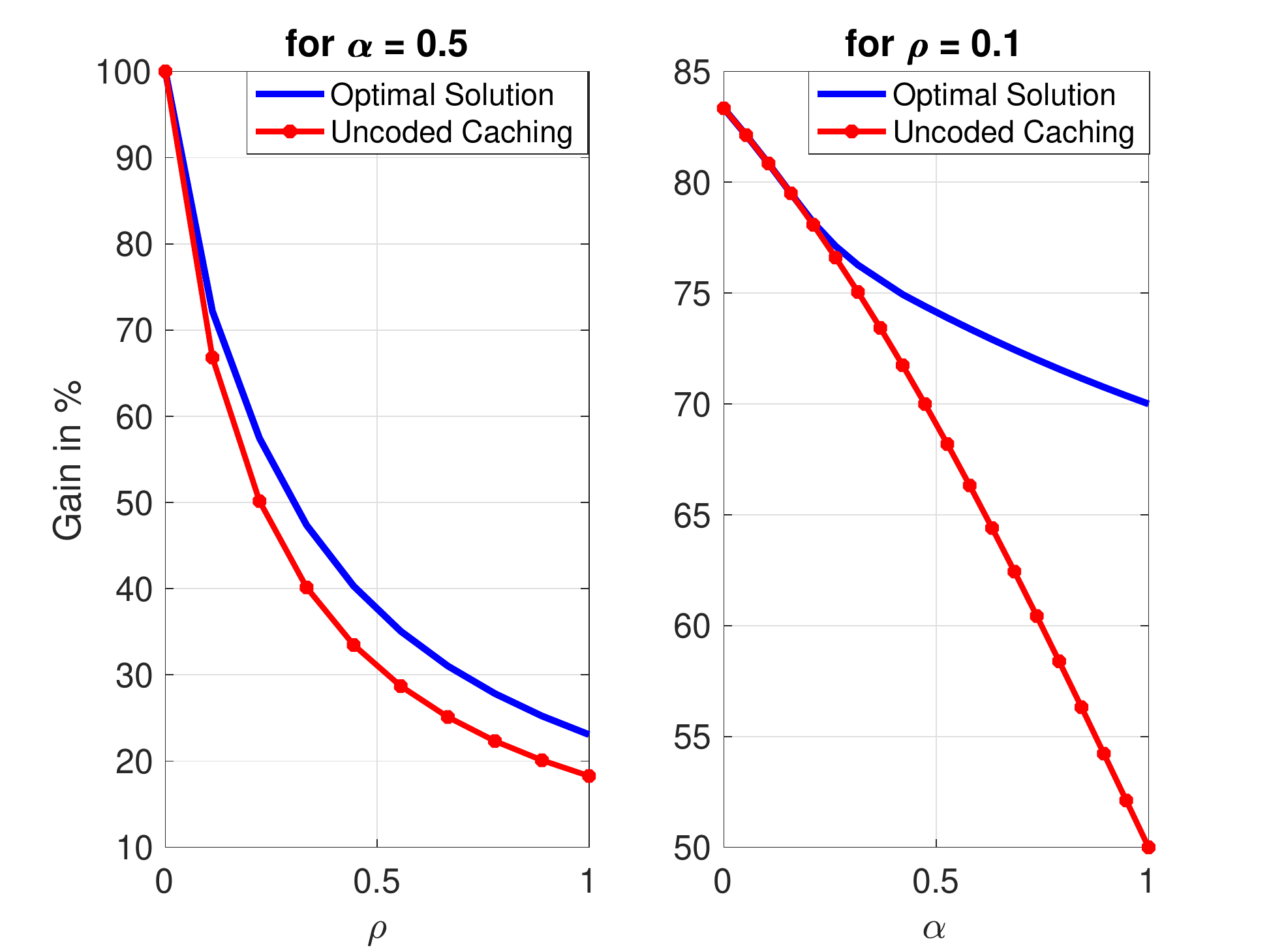}
    \caption{Effect of $\alpha$ and $\rho$  on the caching gain. }
    \label{fgr:costtype3}
\end{figure}
We first consider a system consisting of $K=5$ users interested in a library of $N =10$ files. Based on the results in Theorem \ref{the1}, we show in Figure~\ref{fgr:costtype} how the optimal caching type, i.e., how many users cache the same subfile, changes with different settings of of the cost parameter $\alpha$ (i.e., different network architectures during the placement phase) and the linear cost multiplier $\rho$.  As $\alpha$ increases,  the optimal caching type $t^*$ decreases - in a step-wise fashion - when $\rho > \gamma_{t^*}$. Note that the free placement regime corresponds to the case when $\rho = \gamma_K = 0$.
In Figure~\ref{fgr:costtype2}, we show how the optimal caching type changes with the size of the service provider's library ($N$). The optimal decision for a given $\rho$ depends on the values of $\gamma_t$, that are inversely proportional to $N$. Hence, for a large value of $\alpha$, in the worst case scenario, the optimal caching type decreases as the number of files increases. Finally, in Figure \ref{fgr:costtype3}, we show how the gain obtained through our characterized optimal solution diverges from that of uncoded caching as the placement cost increases.

\newpage
\section{Conclusion}
In this work, we derived a rigorous result characterizing the impact of placement cost on the caching gain and structure of optimal caching policies under a general framework that allows for varying the network architecture and cost per transmission across the placement and delivery phases. Future work can extend this result by adopting a stochastic model for traffic patterns, user mobility, and available end user memory.


\bibliographystyle{IEEEtran}
\bibliography{ref.bib}

\end{document}